\def\@rmrk#1#2{\refstepcounter
    {#1}\@ifnextchar[{\@yrmrk{#1}{#2}}{\@xrmrk{#1}{#2}}}
\makeatletter\@addtoreset{equation}{section}\makeatother
 \newfont{\bfit}{cmbxti10 scaled 2000}
 \newfont{\biggi}{cmr12 scaled 2000}
 \newcommand{\eps}{\varepsilon}
 \newcommand{\R}{\mathbb{R}}
 \newcommand{\N}{\mathbb{N}}
 \newcommand{\prob}{\mathbb{P}}
 \renewcommand{\P}{\mathbb{P}}
 \newcommand{\skria}{{\mathcal A}}
 \newcommand{\skrib}{{\mathcal B}}
 \newcommand{\skric}{{\mathcal C}}
 \newcommand{\skrid}{{\mathcal D}}
 \newcommand{\skrie}{{\mathcal E}}
 \newcommand{\skrif}{{\mathcal F}}
 \newcommand{\skrih}{{\mathcal H}}
 \newcommand{\skrik}{{\mathcal K}}
 \newcommand{\skrim}{{\mathcal M}}
 \newcommand{\skrit}{{\mathcal T}}
 \newcommand{\skriv}{{\mathcal V}}
 \newcommand{\skrix}{{\mathcal X}}
 \newcommand{\skriy}{{\mathcal Y}}
 \newcommand{\sfrac}[2]{\mbox{$\frac{#1}{#2}$}}
\def\1{{\mathchoice {1\mskip-4mu\mathrm l}      
{1\mskip-4mu\mathrm l}
{1\mskip-4.5mu\mathrm l} {1\mskip-5mu\mathrm l}}}
\newcommand{\eq}{\begin{equation}}
\newcommand{\en}{\end{equation}}
\renewcommand{\subsection}{\secdef \subsct\sbsect}
\newcommand{\subsct}[2][default]{\refstepcounter{subsection}
\vspace{0.15cm}
{\flushleft\bf \arabic{section}.\arabic{subsection}~\bf #1  }
\nopagebreak\nopagebreak}
\newcommand{\sbsect}[1]{\vspace{0.1cm}\noindent
{\bf #1}\vspace{0.1cm}}
\newtheorem{theorem}{Theorem}[section]
\newtheorem{lemma}[theorem]{Lemma}
\newtheorem{cor}[theorem]{Corollary}
\newtheoremstyle{thm}{1.5ex}{1.5ex}{\itshape\rmfamily}{}
{\bfseries\rmfamily}{}{2ex}{}
\newtheoremstyle{rem}{1.3ex}{1.3ex}{\rmfamily}{}
{\itshape\rmfamily}{}{1.5ex}{}
\theoremstyle{rem}
\newtheorem{remark}{{\slshape\sffamily Remark}}[]
\def\thebibliography#1{\section*{References}
  \list%
  {\arabic{enumi}.}
    {\settowidth\labelwidth{[#1]}\leftmargin\labelwidth
    \advance\leftmargin\labelsep
    \parsep0pt\itemsep0pt
    \usecounter{enumi}}
    \def\newblock{\hskip .11em plus .33em minus .07em}
    \sloppy                   
    \sfcode`\.=1000\relax}
\begin{document}
\title[LLLD  for multitype Galton-Watson  process]
{\Large Local Large  deviations: MCMillian Theorem for  multitype Galton-Watson  Processes}

\author[Kwabena Doku-Amponsah]{}

\maketitle
\thispagestyle{empty}
\vspace{-0.5cm}

\centerline{\sc{By Kwabena Doku-Amponsah}}
\renewcommand{\thefootnote}{}
\footnote{\textit{Mathematics Subject Classification :} 94A15,
 94A24, 60F10, 05C80} \footnote{\textit{Keywords: } Local large deviation, Kullback action,variational principle,spectral  potential, eigen vector, perron fobenius eigen value, Galton-Watson process, typed trees.}
\renewcommand{\thefootnote}{1}
\renewcommand{\thefootnote}{}
\footnote{\textit{Address:} Statistics Department, University of
Ghana, Box LG 115, Legon,Ghana.\,
\textit{E-mail:\,kdoku@ug.edu.gh}.}
\renewcommand{\thefootnote}{1}
\centerline{\textit{University of Ghana}}

\begin{quote}{\small }{\bf Abstract.}
In  this  article  we prove  a  local  large deviation  principle (LLDP)  for  the critical multitype  Galton-Watson  process  from spectral  potential  point. We  define  the  so-called  a spectral  potential $U_{\skrik}(\,\cdot,\,\pi)$ for   the Galton-Watson  process, where $\pi$   is  the normalized  eigen  vector  corresponding  to  the  leading \emph{Perron-Frobenius eigen  value } $\1$  of  the  transition matrix $\skria(\cdot,\,\cdot)$  defined from ${\skrik},$ the  transition  kernel. We show  that  the Kullback  action or  the  deviation function, $J(\pi,\rho),$ with  respect  to an  empirical offspring measure,  $\rho,$   is  the Legendre  dual  of   $U_{\skrik}(\,\cdot,\,\pi).$  From  the LLDP  we  deduce  a conditional  large  deviation  principle  and  a  weak  variant  of  the  classical  McMillian Theorem  for  the  multitype   Galton-Watson  process. To  be  specific,  given  any  empirical  offspring  measure  $\varpi,$   we  show  that  the  number  of  critical multitype  Galton-Watson  processes   on  $n$  vertices is  approximately  $e^{n\langle \skrih_{\varpi},\,\pi\rangle},$   where  $\skrih_{\varpi}$  is a suitably  defined  entropy.

\end{quote}\vspace{0.5cm}

\section{Background}

\subsection{Introduction}

We revisit the typed trees  model  described
by the following procedure: The root carries a random type chosen
according to some law on a finite alphabet; given the type of a
vertex, the number and types of the children (ordered from left to
right) are given, independently of everything else, by an offspring
law.  We  shall  refer  to  the  tree  together  with  the  types  on  the  tree  as a  multitype Galton-Watson  Process.\\

Presently   there  exists  some  Large  deviation  Principle(LDP)  and  Basic Information  Theory  for  the  multitype Galton-Watson  Process.See, example   Dembo et al.~\cite{DMS03},  Doku-Amponsah~\cite{DA17b},  Doku-Amponsah~\cite{DA12},  Doku-Amponsah~\cite{DA17a}. In  \cite{DMS03}, LDPs  were  proved for  empirical   measures  of  the  multitype Galton-Watson trees with the  exponential  moments of  offspring transition kernel being finite, in a topology  stronger  than  weak topology. See \cite[pp.~6]{DMS03}.  \cite{DA12}  proved  an  Asymptotic  Equipatition  Property  for  hierarchical  Data  Structures,  using  \cite[Theorem~2.2]{DMS03},  for  bounded  offspring  transition  kernels. Recently, Doku-Amponsah~\cite{DA17a} found  a  Lossy version of  the result \cite[Theorem~2.1]{DA17a}. The  article  \cite{DA17b}  extended  the LDP  result of  \cite[Theorem~2.2]{DMS03} for the multitype Galton-Watson Process  to  cover  offspring  laws (e.g. geometric \sfrac{1}{2}), which  were  not  covered by  \cite{DMS03}, in  the  weak  topology.\\

  In  this  article   we prove  that the   LLDP for  the  multitype Galton-Watson process  conditioned  on  the  empirical  offspring  measure  of  the   process. See  Bakhtin~\cite[Theorem~3.1]{BIV15}  for  similar  results  for  the  empirical  measure  of  iid random  variables. The  current  article,  like  the  paper by Baktain \cite{BIV15},  differs  from most of the  aformentioned   articles,  by the  lack  of  any topology  on  the  range  of  random  variables and  by the use  of  the  weak  topology  on  the  set  of  measures  generated  by bounded  measurable  functions.  The  main  technique  used  in  this  article  is  spectral  potential  theory. See \cite{BIV15}.  To  be  specific  about  this  approach,  we  define  a  spectral  potential  of  the  multitype Galton-Watson  trees  and  use  it to define  an  extension  of  the  relative  entropy,  which  we  also  call  the  Kullback  action. The  Kullback  action  has  the  rate  function  of  the  LDPs  in  \cite{DA17b}  and  \cite{DMS03} as  special  cases. By  a  proper  exponential  change  of  measure,using  the  Kullback  action  we  prove  the  LLDP, and  from  the  LLDP  we  obtain  the  classical  McMillan-Breiman  Theorem  as  a particular  case.  The  conditional  LDP  is  also  derived  from  the LLDP.

\subsection{The  Multitype  Galton-Watson  Process.}\label{SHS}
 Let us
adopt some concepts from the paper Dembo et al.
\cite{DMS03}. We shall  denote  by  $GW$  the set of all finite rooted
planar trees $\skrit$, by $\skriv=\skriv(\skrit)$ the set of all vertices and by
$\skrie=\skrie(\skrit)$ the set of all edges oriented away from the root, which is
always denoted by $\eta$. We write $|\skriv(\skrit)|$ for the number of vertices
in the tree $\skrit.$ Let $\skriy$ be a finite alphabet with a chosen $\sigma-$ field  $\skrif$ of  subsets and write
$$\displaystyle\skriy^*=\bigcup_{n=0}^\infty \{n\} \times
\skriy^n.$$ 
Note that the
child  of any vertex $v\in \skriv(\skrit)$ is characterized by an element of
$\skriy^*$ and that there is an element $(0,\emptyset)$ in
$\skriy^*$ symbolizing the absence of a  child.

Let $\beta$ be a probability distribution on
$\skriy$ and ${\skrik}:\skriy\times\skriy^{*}\rightarrow[0,\,1]$
be an offspring transition kernel.  We describe  the law $\prob$ of a multitype Galton-Watson process $Y,$  see Mode~\cite{Mo71}, by the following procedure:
\begin{itemize}
\item The root $\eta$ carries a random type $Y(\eta)$ chosen according
to the probability measure $\beta$ on $\skriy.$
\item For every vertex with type $a\in\skriy$ the offspring number
and types are given independently of everything else, by the
offspring law ${\skrik}\{\cdot\,|\,a\}$  on $\skriy^{*}.$ We
write
$$
{\skrik}\big\{\cdot\,|\,a\}={\skrik}\big\{(N,Y_1,...,Y_N)\in\,\cdot\,|\,a\big\},$$
i.e. we have a random number $N$ of descendants  with types
$Y_1,...,Y_N.$
\end{itemize}
We shall consider $Y=((Y(v),\,C(v)),\,v\in \skriv)$  under the joint law
of tree and offspring. Interpret $Y$ as a \emph{multitype
Galton-Watson process } and $Y(v)$ as the type of vertex $v.$ For each
typed tree $Y$ and each vertex $v$ we denote by $\displaystyle
C(v)=(N(v),Y_1(v),\ldots,Y_{N(v)}(v))\in \skriy^*,$ the number and
types of the offsprings of $v$, ordered from left to right. Notice
that the children of the root (denoted by $\eta$) are ordered, but
the root itself is not. 
Denote, for every $c=(n(c),a_1(c),\ldots,a_n(c))\in\skriy^{*}$ and $
a\in\skriy,$ the \emph{multiplicity} of the symbol $a$ in $c$ by
 $$m(a,c)=\sum_{i=1}^{n(c)}1_{\{a_{i}=a\}}.$$

Define the matrix $\skria$ with index set $\skriy\times\skriy$ and
nonnegative entries by
$$\skria(a,b)=\sum_{c\in\skriy^{*}}{\skrik}\{c\,|\,b\}m(a,c),\mbox{
for $a,b\in\skriy.$}$$

$\skria(a,b)$ is the expected number of offspring of type $a$ of a vertex
of type $b.$  Let
$\skria^{*}(a,b)=\sum_{k=1}^{\infty}\skria^{k}(a,b)\in[0,\infty].$ We say that
the matrix $\skria$ is irreducible if $\skria^{*}(a,b)>0;$ for all
$a,b\in\skriy.$

The multitype Galton-Watson Process is called irreducible if the matrix
$\skria$ is irreducible. It is called critical $($subcritical,
supercritical$)$ if the largest eigenvalue of the matrix $A$ is $1 $
$($ less than $1,$ greater than $1$ resp.$).$ Let $\pi$ be the
\emph{eigenvector} corresponding to the largest  \emph{Perron-Frobenius eigenvalue
} $1$ (normalized to a probability vector).  Then, by the Perron-Frobenius Theorem, $\pi$ is
\emph{unique}, if the Galton-Watson tree is irreducible.  See  Dembo et al.~\cite[Theorem~3.1.1]{DZ98}.

For every multitype Galton-Watson tree $Y,$ the \emph{empirical
offspring measure }$M_Y$ is defined by
$$M_Y(a,c)=\frac{1}{|T|}\sum_{v\in V}\delta_{(Y(v),C(v))}(a,c),\,\mbox{ for
$(a,c)\in\skriy\times\skriy^{*}$}.$$ We call $\rho$
\emph{shift-invariant} if\,
$\displaystyle\rho_{1}(a)=\sum_{(b,c)\in\skriy\times\skriy^*}m(a,c)\rho(b,c),\,\mbox{
for all $a\in\skriy\times\skriy^{*}$ }.$

 We denote by $\skrim(\skriy\times\skriy^*)$ the space of
probability measures $\rho$ on $\skriy\times\skriy^*$ with $\int n \,
\rho(da \, , dc)<\infty$, using the convention
$c=(n,a_1,\ldots,a_n)$. Denote  by  $\skrim_*(\skriy\times\skriy^*)$  the  space  of   positive  measures  on   $\rho$ on $\skriy\times\skriy^*$ with $\int n \,\rho(da \, , dc)<\infty,$ by $\skrib(\skriy\times\skriy^*)$ the  space  of  all  real-valued bounded  measurable functions  on  $\skriy\times\skriy^*,$    by $\skrib_*(\skriy\times\skriy^*)$ the  space  of  continuous linear functionals on  $\skrib(\skriy\times\skriy^*)$ and  by $\skrib_{+}(\skriy\times\skriy^*)$ the  collection  of  all  positive linear  functionals  on  $\skrib(\skriy\times\skriy^*).$


The  remaining  part  of  the  article  is  organized  in  the  following  manner: Section~\ref{AEP}  contains  the  main  results  of  the  article:  Theorem~\ref{smb.tree}, Corollary~\ref{smb.tree1}  and Theorem~\ref{smb.tree2}. In  Section~\ref{Proofmain} these results  are  proved.

\section{Statement of main results}\label{AEP}
 Write  $\displaystyle \langle f\,,\,\sigma\rangle:=\sum_{y\in\skriy}\sigma(y)f(y)$  and   define  the  \emph{spectral  potential}  $U_{\skrik}(g,\,\pi)$  of  the multitype  Galton-Watson process  $Y$ by
\begin{equation}\label{Sp}
U_{\skrik}(g,\,\pi)=\log\langle e^{g},\,\pi\otimes{\skrik}\rangle.
\end{equation}

Observe  that  \ref{Sp}  possesses  all  the  remarkable  properties  mentioned  in  \cite[pp.536-538]{BIV15}.i.e.  (i) It  is  finite  on  $\displaystyle \skrid(\pi):=\Big\{ g:\skriy\times\skriy_*\to \R\, | \langle e^{g},\,\pi\otimes{\skrik}\rangle<\infty\Big\}$  (ii)  It  is  monotone (iii)  it  is  additively  homogeneous  and  it  is  convex  in $g.$   For $\rho\in\skrib(\skriy\times\skriy_*)$ we call
a nonlinear  functional

\begin{align}\label{Kullback}
J(\pi,\,\rho):=\left\{\begin{array}{ll}H(\rho\,\|\,\pi\otimes{\skrik})
& \mbox{ if  $\rho$ is shift-invariant and $\rho_1=\pi$,}\\
+\infty & \mbox{otherwise.}
\end{array}\right.
\end{align}

\emph{Kullback  action.}

 We denote by $ P_n(y)=\prob_n\{Y=y\}=\prob\Big\{Y=y\,\big |\,|\,\skriv|=n\Big\}$ the
distribution of the multitype Galton-Watson tree  $x$ conditioned to
have $n$
vertices.  In  Theorem~\ref{smb.tree}  below  we  state  our  main  result,  the  LLDP  for  the multitype Galton-Watson tree.
\begin{theorem}[LLDP]\label{smb.tree}
Suppose $y=(y(v):v\in \skriv)$ is an irreducible, critical multitype
Galton-Watson tree on  $n$ vertices, with finite  type space $\skriy$ and an
offspring kernel ${\skrik}$. Then,
\begin{itemize}

\item[(i)] for any  functional  $\rho\in \skrim_*(\skriy\times\skriy_*)$  and a  number  $\eps>0,$  there  exists  a  weak  neighborhood  $B_{\rho}$  such  that
$$ P_n\Big\{y\in GW\,\Big |\, M_y\in B_{\rho}\Big\}\le e^{-nJ(\pi,\,\rho)-n\eps+o(n)}.$$
\item[(ii)] for  any  $\rho\in\skrim_*(\skriy\times\skriy_*)$, a  number  $\eps>0$  and  a fine  neighborhood  $B_{\rho}$  we  have  the asymptotic  estimate:
    $$ P_n\Big\{y\in GW\,\Big |\, M_y\in B_{\rho}\Big\}\ge e^{-nJ(\pi,\,\rho)+n\eps-o(n)}.$$
    \end{itemize}
\end{theorem}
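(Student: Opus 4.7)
The plan is to follow the standard ``upper bound via exponential Chebyshev + lower bound via exponential tilt'' scheme of Bakhtin \cite{BIV15}, organised around the Legendre duality $J(\pi,\rho)=\sup_{g}\bigl(\langle g,\rho\rangle-U_{\skrik}(g,\pi)\bigr)$ for admissible $\rho$.

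For the upper bound (i), I would start from the elementary exponential Chebyshev inequality: for every $g\in\skrib(\skriy\times\skriy^{*})$ and every weak neighborhood $B_{\rho}$,
$$ P_n\bigl\{M_y\in B_{\rho}\bigr\}\;\le\; e^{-n\inf_{\rho'\in B_{\rho}}\langle g,\rho'\rangle}\,\E_{n}\!\left[e^{n\langle g,M_y\rangle}\right].$$
Writing $e^{n\langle g,M_y\rangle}=\prod_{v\in\skriv}e^{g(Y(v),C(v))}$ and recursing on the tree, Perron--Frobenius applied to the mean matrix $\skria$ (leading eigenvalue $1$ by criticality, eigenvector $\pi$) yields the identification
$$\tfrac{1}{n}\log\E_{n}\!\left[e^{n\langle g,M_y\rangle}\right]=U_{\skrik}(g,\pi)+o(1).$$
Now pick $g$ so that $\langle g,\rho\rangle-U_{\skrik}(g,\pi)\ge J(\pi,\rho)-\eps$ and take $B_{\rho}=\{\rho':|\langle g,\rho'\rangle-\langle g,\rho\rangle|<\eps\}$, which is a legitimate weak neighborhood since $g\in\skrib$. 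Combining gives the claimed bound $e^{-nJ(\pi,\rho)-n\eps+o(n)}$. The cases in which $\rho$ fails to be shift-invariant or $\rho_{1}\ne\pi$ (so $J(\pi,\rho)=+\infty$) are handled by letting $g$ probe the shift-defect or by taking $g$ a function of the type coordinate alone and sending its amplitude to infinity.

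For the lower bound (ii), I would perform an exponential change of measure. Fix $g^{*}$ nearly attaining $J(\pi,\rho)=\langle g^{*},\rho\rangle-U_{\skrik}(g^{*},\pi)$, and set
$$\widetilde{P}_{n}(dy)\;:=\;\frac{e^{n\langle g^{*},M_y\rangle}}{\E_{n}[e^{n\langle g^{*},M_y\rangle}]}\,P_n(dy).$$
Under $\widetilde{P}_{n}$ the tree is again a Galton--Watson process with offspring kernel proportional to $e^{g^{*}}\skrik$, re-normalised through the Perron--Frobenius eigenvector of the tilted mean matrix so that it remains critical and so that its typical empirical offspring measure is exactly $\rho$. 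A law of large numbers for the tilted critical tree — tested against arbitrary $f\in\skrib$ — yields $\widetilde{P}_{n}(M_y\in B_{\rho})\to1$ for every fine neighborhood $B_{\rho}$. Inverting the tilt,
$$ P_n(M_y\in B_{\rho})\;=\;\widetilde{\E}_{n}\!\left[\1_{\{M_y\in B_{\rho}\}}\frac{\E_{n}[e^{n\langle g^{*},M_y\rangle}]}{e^{n\langle g^{*},M_y\rangle}}\right]\;\ge\;e^{-n\langle g^{*},\rho\rangle+nU_{\skrik}(g^{*},\pi)-n\eps+o(n)},$$
which by the choice of $g^{*}$ equals $e^{-nJ(\pi,\rho)+n\eps-o(n)}$.

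The main obstacle is twofold. First, rigorously establishing $\tfrac{1}{n}\log\E_{n}[e^{n\langle g,M_y\rangle}]=U_{\skrik}(g,\pi)+o(1)$ for the process \emph{conditioned} on $|\skriv|=n$: this needs the Kesten-type polynomial asymptotic $\prob(|\skriv|=n)\asymp n^{-3/2}$ for critical offspring laws so that the conditioning contributes only a subexponential correction, together with a careful book-keeping of the Perron--Frobenius normalisations. Second, upgrading the LLN under $\widetilde{P}_{n}$ from the weak topology to the fine topology — this is the reason for the asymmetry between the two halves of the statement, and it requires a direct concentration argument for $\langle f,M_y\rangle$ with $f\in\skrib$ on the tilted critical tree rather than any continuity-based weak-convergence argument.
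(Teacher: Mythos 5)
Your proposal follows essentially the same route as the paper: the upper bound is the exponential Chebyshev inequality, which is algebraically identical to the paper's explicit change of measure $\tilde{P}_n$ (built from the offspring kernel tilted by $e^{g-U_{\skrik}(g,\pi)}$) combined with a bound on the Radon--Nikodym derivative over the neighborhood $B_{\rho}=\{\varpi:\langle g,\varpi\rangle>\langle g,\rho\rangle-\eps/2\}$, and the lower bound is the same exponential tilt plus a law of large numbers under the tilted critical tree, with the Legendre duality of Lemma~\ref{Lem1}(i) supplying the near-optimal $g$ in both halves. You have also correctly identified the two technical points the paper delegates elsewhere, namely the subexponential cost of conditioning on $|\skriv|=n$ (handled in the paper via \cite[Lemma~3.1]{DMS03}) and the criticality/normalisation of the tilted kernel, so no substantive gap separates your plan from the paper's argument.
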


Next  we  state  the  McMillian-Breiman Theorem  for  the multitype Galton-Watson tree. 
For  any  $a\in\skriy$   and  the  conditional probability  measure $\varpi(\cdot\,|\,a)$   on  $\skriy_0^*$ we  define  an  entropy   by
$$\skrih_\varpi(a):= -\sum_{c\in\skriy_0^*}\varpi(c\,|\,a)\log\, \varpi(c\,|\,a).$$  We write
$$\displaystyle\skriy_0^*=\bigcup_{n=0}^{N_0}\{n\} \times
\skriy^n$$  and  state a corollary  of  Theorem~\ref{smb.tree} below:
\begin{cor}[McMillian Theorem]\label{smb.tree1} Suppose  $\skriy_0$  is  the  space  of  all  irreducible, critical multitype
Galton-Watson trees with finite  type space $\skriy$ and an
offspring kernel supported  on  $\skriy_0^*.$
\begin{itemize}

\item[(i)] For  any empirical offspring  measure  $\varpi$   on  $\skriy\times\skriy_0^*$   and  $\eps>0,$  there  exists  a neighborhood  $B_{\varpi}$  such  that
$$ Card\Big(\big\{y\in\skriy_0\,| \,M_y\in B_{\varpi}\big\}\Big)\ge e^{n\big(\langle \skrih_\varpi\,,\,\pi\rangle+\eps\big)}.$$
\item[(ii)] for any  neighborhood  $B_{\varpi}$   and  $\eps>0,$  we  have
   $$Card\Big(\big\{y\in\skriy_0\,|\, M_y\in B_{\varpi}\big\}\Big)\le e^{n\langle \skrih_\varpi\,,\,\pi\rangle-\eps\big)},$$
    \end{itemize}
\end{cor}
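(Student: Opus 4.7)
The plan is to deduce the corollary directly from the LLDP of Theorem~\ref{smb.tree} by combining it with the elementary product representation of the conditional law $P_n(y)$ in terms of the empirical offspring measure $M_y$.

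First I would unfold the joint law of the tree and its types as a product over vertices. For any realization $y$ with $|\skriv|=n$,
$$ P_n(y)\;=\;\frac{\beta(y(\eta))\,\prod_{v\in\skriv}\skrik\{C(v)\,|\,Y(v)\}}{\prob\{|\skriv|=n\}},$$
so that
$$ \log P_n(y)\;=\;n\,\langle \log\skrik,\,M_y\rangle\;-\;\log\prob\{|\skriv|=n\}\;+\;O(1). $$
Since $\skriy_0^*$ is finite the map $(a,c)\mapsto\log\skrik\{c\,|\,a\}$ belongs to $\skrib(\skriy\times\skriy_0^*)$, so for every $\eps>0$ I may shrink the weak neighborhood $B_\varpi$ until $\bigl|\langle\log\skrik,\nu\rangle-\langle\log\skrik,\varpi\rangle\bigr|<\eps$ for every $\nu\in B_\varpi$. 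Combined with the standard polynomial decay $\log\prob\{|\skriv|=n\}=o(n)$ in the critical regime, this yields the \emph{uniform} asymptotic
$$ \log P_n(y)\;=\;n\,\langle\log\skrik,\varpi\rangle\;+\;O(n\eps)+o(n)\qquad\text{whenever }M_y\in B_\varpi. $$

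Second, Theorem~\ref{smb.tree}(i)--(ii), applied on a suitable common (weak/fine) neighborhood of $\varpi$, sandwiches
$$ e^{-nJ(\pi,\varpi)-n\eps+o(n)}\;\le\; P_n\{y:M_y\in B_\varpi\}\;\le\;e^{-nJ(\pi,\varpi)+n\eps+o(n)}. $$
Writing $P_n\{M_y\in B_\varpi\}=\sum_{y:\,M_y\in B_\varpi}P_n(y)$ and using the uniform estimate of the first step on every summand then extracts
$$ \mathrm{Card}\bigl\{y\in\skriy_0:M_y\in B_\varpi\bigr\}\;=\;e^{-nJ(\pi,\varpi)\,-\,n\langle\log\skrik,\varpi\rangle\,\pm\,Cn\eps\,+\,o(n)}. $$

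Third, I would identify the exponent with the weighted entropy. When $J(\pi,\varpi)<\infty$ the measure $\varpi$ is shift-invariant with $\varpi_1=\pi$, and the factorization $\varpi(a,c)=\pi(a)\varpi(c\,|\,a)$ gives, by direct calculation,
$$ J(\pi,\varpi)\;=\;H(\varpi\,\|\,\pi\otimes\skrik)\;=\;-\langle\skrih_\varpi,\pi\rangle\;-\;\langle\log\skrik,\varpi\rangle. $$
Substituting, the $\langle\log\skrik,\varpi\rangle$ terms cancel exactly and
$$ \mathrm{Card}\bigl\{y\in\skriy_0:M_y\in B_\varpi\bigr\}\;=\;e^{n\langle\skrih_\varpi,\pi\rangle\,\pm\,Cn\eps\,+\,o(n)}, $$
which, after rescaling $\eps$, delivers both the lower bound (i) and the upper bound (ii).

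The main technical obstacle I anticipate is the uniformity of $\log P_n(y)$ in the first step: a priori different trees with the same approximate $M_y$ could have very different log-likelihoods, so one must exploit that $\log P_n(y)$ depends on $y$ only through the linear pairing $\langle\log\skrik,M_y\rangle$ and use boundedness of $\log\skrik$ on the finite support $\skriy\times\skriy_0^*$. The degenerate case $\varpi\not\ll\pi\otimes\skrik$ contributes both $J(\pi,\varpi)=+\infty$ and $\mathrm{Card}=0$, so the statement is vacuous there and may be set aside.
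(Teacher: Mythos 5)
Your argument is correct in substance, but it takes a genuinely different route from the paper. The paper's own proof is a one-line substitution: it applies Theorem~\ref{smb.tree} with $\rho=\varpi$ after replacing the offspring kernel by the counting ``kernel'' $\skrik\{c\,|\,a\}=1$ on $\skriy_0^*$, so that $P_n$ becomes (up to normalization) the counting measure on trees and $J(\pi,\varpi)=H(\varpi\,\|\,\pi\otimes\1)=-\langle\skrih_\varpi,\pi\rangle$ directly. You instead keep the true kernel, convert probability into cardinality through the product representation $\log P_n(y)=n\langle\log\skrik,M_y\rangle+o(n)$, and recover the entropy via the cancellation $-J(\pi,\varpi)-\langle\log\skrik,\varpi\rangle=\langle\skrih_\varpi,\pi\rangle$. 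Each approach has a cost: the paper's shortcut silently applies a theorem stated for probability offspring kernels to a non-normalized reference measure, whereas your version stays within the stated hypotheses but needs $\log\skrik$ to lie in $\skrib(\skriy\times\skriy_0^*)$, i.e.\ $\skrik\{c\,|\,a\}>0$ for \emph{every} $(a,c)\in\skriy\times\skriy_0^*$ --- otherwise trees using a $\skrik$-null offspring configuration are counted in $Card$ but are invisible to $P_n$, and the probability-to-cardinality conversion fails. This is defensible under the corollary's hypothesis that the kernel is ``supported on $\skriy_0^*$,'' but you should state it explicitly. One smaller inaccuracy: your closing remark that $\varpi\not\ll\pi\otimes\skrik$ forces $\mathrm{Card}=0$ is not right as stated (the set is $P_n$-null, not empty); the case that is genuinely vacuous for small neighborhoods is $\varpi$ failing shift-invariance or $\varpi_1\neq\pi$, which is what makes $J(\pi,\varpi)=+\infty$ harmless here. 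Your use of $\log\prob\{|\skriv|=n\}=o(n)$ matches the paper's own appeal to \cite[Lemma~3.1]{DMS03}, and the sandwich on a common neighborhood is legitimate since Theorem~\ref{smb.tree}(ii) holds for any fine neighborhood.
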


where $Card(A)$  means  the  cardinality  of  $A.$

\begin{remark}
Note  that  if  the  transition  kernel  of  the  multitype  Galton-Watson process  is  bounded. Thus, there  exists $N_0\in\N$  such  that
${\skrik}\Big\{N>N_0\,|\,a\Big\}=0, \,\mbox{ for  all   $a\in\skriy$},$
then, we  could  recover  the  results  of  \cite[Theorem~2.1]{DA12}  by  setting  $\varpi(\cdot\,|\,a)={\skrik}\big\{\cdot\,|\,a\big\},$ for $a\in\skrix.$ Thus,  we  have

$$ Card\Big(\Big\{y\in\skriy_0\,\Big\}\Big)\approx e^{n\langle \skrih_{\skrik}\,,\,\pi\rangle}.$$

\end{remark}

Finally,  we  state  in  Theorem~\ref{smb.tree2}  the  full  LDP  for  the multitype Galton-Watson tree.
\begin{theorem}[LDP]\label{smb.tree2}
Suppose $y=(y(v):v\in \skriv)$ is an irreducible, critical multitype
Galton-Watson tree with finite  type space $\skriy$ and
offspring kernel ${\skrik}$.

\begin{itemize}

\item[(i)]  Let  $F$ be an open subset  of  $ \skrim(\skriy\times\skriy^*)$.  Then  we  have
$$\lim_{n\to\infty}\frac{1}{n}\log P_n\Big\{y\in GM\,\Big |\,M_y\in F\Big\}\ge - \inf_{\rho\in F}J(\pi,\,\rho).$$

\item[(ii)] Let  $\Gamma$ be  a closed subset  of  $ \skrim(\skriy\times\skriy^*)$. The  we  have

    $$ \lim_{n\to\infty}\frac{1}{n}\log P_n\Big\{y\in GM\,\Big |\, M_y\in \Gamma\Big\}\le-\inf_{\rho\in \Gamma}J(\pi,\,\rho).$$

    \end{itemize}

\end{theorem}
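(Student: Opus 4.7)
The plan is to derive Theorem~\ref{smb.tree2} as a direct consequence of the LLDP in Theorem~\ref{smb.tree}, via the standard passage from local exponential estimates around each point to an LDP on open and closed sets. The lower bound on open sets will be essentially immediate because the LLDP already furnishes the matching neighborhood estimate at every interior point, whereas the upper bound on closed sets requires a finite covering argument together with exponential tightness.

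For the lower bound (i), fix an open set $F\subseteq\skrim(\skriy\times\skriy^*)$, any $\rho\in F$ and $\eps>0$. By openness we may choose a fine neighborhood $B_\rho$ of $\rho$ with $B_\rho\subseteq F$, and part~(ii) of Theorem~\ref{smb.tree} then yields
\[
P_n\bigl\{y\in GW : M_y\in F\bigr\}\ge P_n\bigl\{M_y\in B_\rho\bigr\}\ge e^{-nJ(\pi,\rho)+n\eps-o(n)}.
\]
Taking $\tfrac{1}{n}\log$ and $\liminf$ gives $\liminf_{n\to\infty}\tfrac{1}{n}\log P_n\{M_y\in F\}\ge -J(\pi,\rho)+\eps$; optimizing over $\rho\in F$ and letting $\eps\downarrow 0$ delivers (i).

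For the upper bound (ii) I would first handle compact closed $\Gamma$. For each $\rho\in\Gamma$, part~(i) of Theorem~\ref{smb.tree} produces a weak neighborhood $B_\rho$ with $P_n\{M_y\in B_\rho\}\le e^{-nJ(\pi,\rho)-n\eps+o(n)}$. Compactness extracts a finite subcover $\Gamma\subseteq\bigcup_{i=1}^N B_{\rho_i}$, and since every $\rho_i\in\Gamma$ satisfies $J(\pi,\rho_i)\ge\inf_{\rho\in\Gamma}J(\pi,\rho)$, a union bound yields
\[
P_n\{M_y\in\Gamma\}\le\sum_{i=1}^N P_n\{M_y\in B_{\rho_i}\}\le N\exp\Bigl(-n\inf_{\rho\in\Gamma}J(\pi,\rho)-n\eps+o(n)\Bigr).
\]
Taking $\tfrac{1}{n}\log$, $\limsup$, and sending $\eps\downarrow 0$ closes the compact case.

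The main obstacle, and where the real work lies, is upgrading this from compact $\Gamma$ to an arbitrary closed $\Gamma$. Since $\skriy^*=\bigcup_{n\ge 0}\{n\}\times\skriy^n$ is only countable and $\skrim(\skriy\times\skriy^*)$ fails to be compact because mass can escape to large offspring counts, I would establish exponential tightness of the laws of $M_y$ under $P_n$. Criticality of the kernel together with the finiteness of the spectral potential $U_{\skrik}(g,\pi)$ on the domain $\skrid(\pi)$ provides uniform control on exponential moments of the offspring-size coordinate under $P_n$; a standard truncation in that coordinate then furnishes compact sets $K_L\subseteq\skrim(\skriy\times\skriy^*)$ with $\limsup_{n\to\infty}\tfrac{1}{n}\log P_n\{M_y\notin K_L\}\le -L$. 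Combined with the compact-set upper bound above, this reduces any closed $\Gamma$ to the intersection $\Gamma\cap K_L$ and concludes the proof of (ii) after letting $L\to\infty$.
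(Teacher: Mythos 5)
Your lower bound and your finite-cover upper bound for compact $\Gamma$ coincide with the paper's argument: the paper likewise observes that Theorem~\ref{smb.tree}(ii) immediately gives the open-set lower bound, and obtains the closed-set upper bound by covering $\Gamma$ with the neighborhoods supplied by Theorem~\ref{smb.tree}(i), summing, and letting $\eps\downarrow 0$. Where you part ways with the paper is the reduction to the compact case. The paper does not use exponential tightness at all: following Bakhtin, it notes that the empirical offspring measure lies in the unit ball of $\skrib_*(\skriy\times\skriy^*)$, which is weak-$*$ compact by Banach--Alaoglu, and declares $\Gamma$ relatively compact "without loss of generality" (invoking Lemma~\ref{Lem1}(iii) for the compactness of the level sets of $J$). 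This is precisely the payoff of formulating the LLDP for arbitrary positive functionals $\rho\in\skrim_*$ rather than only for countably additive probability measures: the covering argument can be run inside the weak-$*$ compact closure, where the upper estimate of Theorem~\ref{smb.tree}(i) is still available, and no tightness of the laws of $M_y$ is needed.

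Your alternative route via exponential tightness is the more standard probabilistic device, but as written it contains a genuine gap. You claim that "criticality of the kernel together with the finiteness of the spectral potential $U_{\skrik}(g,\pi)$ on the domain $\skrid(\pi)$ provides uniform control on exponential moments of the offspring-size coordinate." This is circular: $\skrid(\pi)$ is \emph{defined} as the set of $g$ for which $\langle e^{g},\pi\otimes\skrik\rangle<\infty$, so finiteness of $U_{\skrik}$ there is a tautology and tells you nothing about whether functions growing linearly in the offspring count $n(c)$ belong to $\skrid(\pi)$. The theorem's hypotheses impose no exponential moment condition on $\skrik$ (indeed, the paper's stated motivation is to cover offspring laws not handled by the exponential-moment framework of Dembo--M\"orters--Sheffield), so the compact sets $K_L$ with $\limsup_n\frac1n\log P_n\{M_y\notin K_L\}\le -L$ that you posit cannot be produced by "a standard truncation" without an additional assumption. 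To close your argument you would either need to add such a moment hypothesis, or abandon exponential tightness and instead carry out the covering in the weak-$*$ compact unit ball of $\skrib_*$ as the paper does --- which is exactly why the paper's LLDP is formulated on $\skrim_*(\skriy\times\skriy^*)$ and its Kullback action on $\skrib_*(\skriy\times\skriy^*)$ in the first place.
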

\section{Proof  Main of  Results}\label{Proofmain}

\subsection{Properties  of  the  Kullback  action.}
Next,  we  state  a key  ingredient (Lemma~\ref{Lem1}) in the  proof  of our  main result, Theorem~\ref{smb.tree}.  This  Lemma  gives  remarkable   properties  of  \ref{Kullback} above,  which  will  help  us  circumvent  the  topological  problems  faced  in \cite{DA17b}  and  \cite{DMS03}. To  state  the  lemma,  we denote  by  $\skric$   the  space  of  continuous  functions    $g:\skriy\times\skriy_*\to \R$  and notice that, the  proof  below  follows  similar  ideas as  the  proof  of  \cite[Lemma~2.2]{BIV15}  for  the  empirical  measures on  measurable spaces.

\begin{lemma}\label{Lem1}\label{LLDP1}The  following  holds  for  the  Kullback action or divergence  function  $J(\pi,\,\rho).$
\begin{itemize}
\item [(i)]$\displaystyle J(\pi,\,\rho)=\sup_{g\in\skric}\Big\{\langle g,\,\rho\rangle-U_{\skrik}(g,\,\pi) \Big\}.$
\item[(ii)] The  function $J(\pi,\,\cdot)$  is  convex and  lower  semi-continuous on  the  space  $\skrib_*(\skriy\times\skriy_*).$
\item[(iii)] For  any  real  $c,$  the  set  $\Big\{\rho\in\skrib_*(\skriy\times\skriy_*):\, J(\pi,\,\rho)\le  c\Big\}$  is  weakly  compact.
\end{itemize}
\end{lemma}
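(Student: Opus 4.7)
The plan is to establish the Legendre--Fenchel duality (i) first and then deduce (ii) and (iii) from it by routine convex analysis, following the template of \cite[Lemma~2.2]{BIV15} adapted to the shift-invariance constraint built into $J(\pi,\rho)$.

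For (i), I would split into two cases. \emph{Case 1.} When $\rho$ is shift-invariant and $\rho_{1}=\pi$, the right-hand side is the classical Donsker--Varadhan variational representation of $H(\rho\|\pi\otimes\skrik)$. The inequality $\le$ follows from Gibbs/Jensen: writing $f=d\rho/d(\pi\otimes\skrik)$,
\[
\log\langle e^{g},\pi\otimes\skrik\rangle=\log\langle e^{g}/f,\rho\rangle\ge\langle g-\log f,\rho\rangle=\langle g,\rho\rangle-H(\rho\|\pi\otimes\skrik),
\]
and equality is attained in the limit of the bounded truncations $g_{N}=(-N)\vee(\log f)\wedge N$, by dominated convergence. \emph{Case 2.} When shift-invariance or the marginal identity $\rho_{1}=\pi$ fails, I would produce bounded test functions driving the supremum to $+\infty$. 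Testing with a constant $g\equiv C$ handles $\rho(\skriy\times\skriy_{*})\neq 1$. For the shift-invariance defect, I would test with $g(a,c)=t\bigl[\phi(a)-\sum_{b}m(b,c)\phi(b)\bigr]$; the Perron--Frobenius identity $\sum_{a}\pi(a)\skria(b,a)=\pi(b)$ makes the first-order term in $t$ of $U_{\skrik}(g,\pi)$ vanish, leaving a linear growth $t\sum_{a}\phi(a)[\sigma(a)-\rho_{1}(a)]+O(t^{2})$, where $\sigma$ is the type marginal of $\rho$; this diverges as $t\to\infty$ whenever $\sigma\neq\rho_{1}$ for some admissible $\phi$.

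Part (ii) is then immediate from (i): $J(\pi,\cdot)$ is a pointwise supremum of the weakly continuous affine functionals $\rho\mapsto\langle g,\rho\rangle-U_{\skrik}(g,\pi)$, hence convex and weakly lower semi-continuous on $\skrib_{*}(\skriy\times\skriy_{*})$. For (iii), weak compactness of the level sets $\{\rho:J(\pi,\rho)\le c\}$ would follow by combining (ii) with exponential tightness: testing (i) against $g(a,c)=t\,n(c)\1_{\{n(c)>N\}}$ and using finiteness of $\langle e^{g},\pi\otimes\skrik\rangle$ for small $t>0$ yields a uniform estimate of the form $\int_{\{n(c)>N\}}n\,d\rho\le(c+O_{t}(1))/t$ across the sublevel set, whence tightness plus Banach--Alaoglu deliver the weak compactness.

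The main obstacle is Case~2 of (i): since the topology on $\skrib_{*}(\skriy\times\skriy_{*})$ is generated only by bounded functions, the divergence of the supremum cannot be produced by obviously unbounded test functions and must instead be extracted through the algebraic cancellation furnished by $\pi$ being the Perron--Frobenius eigenvector; this is the step where criticality is genuinely used. A secondary difficulty lies in (iii), where $\skrib_{*}(\skriy\times\skriy_{*})$ is strictly larger than $\skrim(\skriy\times\skriy_{*})$, so ruling out purely finitely additive weak cluster points of a sequence in a sublevel set requires the moment bound $\int n\,d\rho<\infty$ to transfer uniformly to the limit via the tightness estimate obtained above.
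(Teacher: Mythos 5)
Your overall strategy --- prove the duality (i) first and read off (ii) and (iii) by convex analysis --- is the same as the paper's, and your Case 1 reproduces the paper's core argument: truncate the logarithm of the density $d\rho/d(\pi\otimes\skrik)$ at levels $\pm N$ and pass to the limit, with the Gibbs/Jensen inequality supplying the reverse bound (which the paper leaves implicit by appealing to the standard variational formula for relative entropy). Two things separate you from the paper. First, the paper's second case is one you omit: when $\rho$ is \emph{not} absolutely continuous with respect to $\pi\otimes\skrik$, it tests against $g_\delta=-\log\delta\cdot\1_{B_\delta}$ on sets $B_\delta$ of $\pi\otimes\skrik$-measure at most $\delta$ but $\rho$-measure at least $\eps$, and lets $\delta\downarrow 0$; your truncations $g_N=(-N)\vee(\log f)\wedge N$ presuppose that the density $f$ exists, so you still need an argument of this kind for singular $\rho$.

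Second, and more seriously, your Case 2 --- the attempt to force the supremum to $+\infty$ when shift-invariance or the identity $\rho_1=\pi$ fails --- does not work as written. The map $t\mapsto U_{\skrik}(t\psi,\pi)=\log\langle e^{t\psi},\pi\otimes\skrik\rangle$ with $\psi(a,c)=\phi(a)-\sum_b m(b,c)\phi(b)$ is convex, vanishes at $t=0$, and by the eigenvector identity has derivative $0$ there; but for large $t$ it grows \emph{linearly}, like $t\cdot\esssup_{\supp(\pi\otimes\skrik)}\psi$, not quadratically. Hence $\langle t\psi,\rho\rangle-U_{\skrik}(t\psi,\pi)\approx t\,\bigl[\langle\psi,\rho\rangle-\esssup\psi\bigr]$, which tends to $-\infty$ or stays bounded; it never tends to $+\infty$ for bounded $\psi$. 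Indeed, since $U_{\skrik}(\cdot,\pi)$ is here a plain log-moment-generating functional of the fixed probability measure $\pi\otimes\skrik$ (and not the logarithm of a Perron--Frobenius eigenvalue of a tilted transfer operator), its Legendre transform over bounded $g$ equals $H(\rho\|\pi\otimes\skrik)$ for \emph{every} probability measure $\rho$, with no shift-invariance constraint emerging in the dual. So the case you rightly identified as needing an argument cannot be closed with this dual pair; the paper simply never addresses it, treating only the dichotomy absolutely continuous versus singular. Your handling of (ii) and (iii) is sound and in fact more careful than the paper's, which merely cites the standard properties of relative entropy; the exponential-tightness estimate with $g=t\,n(c)\1_{\{n(c)>N\}}$ is the right way to rule out mass escaping to a purely finitely additive limit in $\skrib_*(\skriy\times\skriy_*)$.
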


\begin{proof}

(i)   Let   $g\in\skric$  be  such  that  $\langle g,\,\rho\rangle$  approximates  the  functional $\langle \phi,\,\rho\rangle$
and  $U_{\skrik}(g,\,\pi)$  approximates $U_{\skrik}(\phi,\,\pi)$  where  $\phi\in\skrib(\skriy\times\skriy_*).$   Suppose  $\rho$  is  absolutely  continuous  with  respect  to  $\pi\otimes{\skrik}.$  Thus,  there  exists  $g$  such  that  $\rho=g\pi\otimes{\skrik}.$   For $t>0,$  we define  the  approximating function   $g_t\in\skrib(\skriy\times\skriy^*)$
as  follows
 \begin{equation}
 \begin{aligned}\label{funct1}
g_t(a,c):=\left\{\begin{array}{ll}\log\,g(a,c),
& \mbox{ if  $e^{-t}<g(a,c)<e^{t}$,}\\
t, &\mbox{ if $g(a,c)>e^t$} \\
-t, &\mbox{ if $g(a,c)<e^{-t}$}
\end{array}\right.
\end{aligned}
\end{equation}
for  all  $(a,c)\in\skriy\times\skriy^*.$
 Now,  for  $t\to\infty$   we  have
 $$\begin{aligned}
 \langle e^{g_t},\,\pi\otimes\skrik\rangle=\int g\1_{\{e^{-t}<g<e^{t}\}}\pi\otimes\skrik(da,dc)&+\int e^{t} \1_{\{g>e^{t}\}}\pi\otimes\skrik(da,dc)\\
 &+\int e^{-t}\1_{\{e^{-t}>g\}}\pi\otimes\skrik(da,dc)\to \langle g,\,\pi\otimes\skrik\rangle=\langle \1,\,\rho\rangle=1
 \end{aligned}$$

$$\begin{aligned}
 \langle g_t,\,\pi\otimes\skrik\rangle &=\int  g\log g\1_{\{e^{-t}<g<e^{t}\}}\pi\otimes\skrik(da,dc)\\
 &+\int t \1_{\{g>e^{t}\}}\pi\otimes\skrik(da,dc)+\int  -t\1_{\{e^{-t}>g\}}\pi\otimes\skrik(da,dc)
 \to \langle g\log g,\,\pi\otimes\skrik\rangle=\langle \log g,\,\rho\rangle.
 \end{aligned}$$
Therefore  we  have  $\lim_{t\to\infty}\big(\langle g_t,\,\pi\otimes\skrik\rangle-\langle e^{g_t},\,\pi\otimes\skrik\rangle\big)\to J(\pi,\,\rho)$  which proves  Lemma~\ref{LLDP1}(i).

 Suppose  $\rho$  is not  absolutely  continuous  with  respect  to  $\pi\otimes{\skrik}.$ Thus,  there  exists  an  $\eps>0$  such  that for  any  $\eta>0$  there  exists  $B_{\delta}\subset \skriy\times\skriy^*$  with  $\pi\otimes\skrik(B_{\delta})\le \delta,$    and at  the  same  time  we  have $\rho(B_{\delta})>\eps.$  For this  $\delta$   we define  the  function

 \begin{equation}
 \begin{aligned}\label{funct1}
g_{\delta}(a,c):=\left\{\begin{array}{ll}-\log\,\delta
& \mbox{ if  $(a,c)\in B_{\delta} $,}\\
0, &\mbox{ if $(a,c)\notin B_{\delta}.$}
\end{array}\right.
\end{aligned}
\end{equation}
Then  we  have
$\displaystyle\lim_{t\to\infty}\big(\langle g_{\delta},\,\pi\otimes\skrik\rangle-\langle e^{g_{\delta}},\,\pi\otimes\skrik\rangle\big)\ge -\eps\log \delta-\langle e^{g_{\delta}},\,\pi\otimes\skrik\rangle\ge  -\eps\log \delta-\log(2).$

Taking  limit  as  $\delta\downarrow 0$  we  have   $J(\pi,\rho)=+\infty,$  which ends  the  proof  of Lemma~\ref{Lem1} (i).\\

(ii)\& (iii). Observe  from  the  variational  formulation  of  the relative  entropy,  see Dembo et al.~ \cite{DZ98},  and  Lemma~\ref{Lem1}(i) that  $J(\pi,\rho)$  reduces  to equation \ref{Kullback} above.  Now  the  relative  entropy  is  convex  and  lower  semi-continuous, and  all its  level  sets  are  compact. Hence  we  have  $J(\pi,\rho)$  convex  and  lower  semi-continuous,  and  all  its  level  sets  are  weakly  compact in  the  weak  topology,   which  ends  the  proof  of the  Lemma.

\end{proof}

Note that  Lemma~~\ref{Lem1} (i) above  implies  the  so-called  variational  principle.See,  example  Komkov~\cite{KV86}.

\subsection{Proof  of  Theorem~\ref{smb.tree}.}
By   Lemma~\ref{LLDP1},  for  any  $\eps>0$  there  exists  a  function  $g\in\skrib(\skriy\times\skriy_*)$  such  that  $$J(\pi,\rho)-\sfrac{\eps}{2} < \langle g,\,\rho\rangle-U_{\skrik}(g,\,\pi).$$

Let us define  the  probability  distribution $\tilde{P}$   by
$$\tilde{P}_n(y)=\tilde{\P}\big\{y, |\skriv|=n\big\}/\tilde{\P}\big\{|\skriv|=n\big\}=\pi(y(\eta))\prod_{v\in V,|\skriv|=n} e^{g(y(v),c(v))-U_{\skrik}(g,\pi)}{\skrik}\Big\{c(v)\Big |\,y(v)\Big\}/\tilde{\P}\big\{|\skriv|=n\big\}.$$  Then,  it  is  not  too  hard  to  observe  that  we  have  $$ \frac{dP_n(y)}{d\tilde{P}_n(y)}
=\prod_{v\in V,|\skriv|=n}e^{-g(y(v),c(v))+U_{\skrik}(g,\pi)}=e^{-\sum_{v\in V,|\skriv|=n}g(y(v),c(v))+nU_{\skrik}(g,\pi) }\sfrac{\tilde{\P}\big\{|\skriv|=n\big\}}{\P\big\{|\skriv|=n\big\}}.$$

Now  we  define a  neighbourhood  of  the  functional  $\rho$  as  follows:
$$B_{\rho}=\Big\{\varpi\in\skrib(\skriy\times\skriy_*):  \langle g, \, \varpi\rangle>\langle g, \, \rho\rangle -\sfrac{\eps}{2}\Big\}.$$

Using  \cite[Lemma~3.1]{DMS03},  under  the  condition  $M_y\in B_{\rho}$    we  have  that  $$ \frac{dP_n(y)}{d\tilde{P}_n(y)}<e^{nU_{\skrik}(g,\,\pi)-n\langle g, \, \rho\rangle +n\sfrac{\eps}{2}+o(n)}<e^{-nJ(\pi,\rho)+n\eps+o(n)}.$$

Hence,  we have

$$ P_n\Big\{y\in\skrit| M_y\in B_{\rho}\Big\}\le \int\1_{\{M_y\in B_{\rho}\}} d\tilde{P}_n(y) \le \int\1_{\{M_y\in B_{\rho}\}} e^{-nJ(\pi,\,\rho)-n\eps+o(n)}d\tilde{P}_n(y) \le e^{-nJ(\pi,\,\rho)-n\eps+o(n)} .$$
Note  that  $ J(\pi,\rho)=\infty$  implies  Theorem~\ref{smb.tree}(ii), so  it  is  sufficient  for  us  to  prove  it  for  a  probability  measure  of  the  form  $\rho=g\pi\otimes{\skrik}$  and  for $J(\pi,\rho)=\langle \log\, g,\,\rho\rangle<\infty$.   Fix  any  number $\eps>0$  and  any  neighbourhood  $B_{\rho}\subset\skrim(\skriy\times\skriy_*).$  We  define  the  sequence  of  sets

$$\skrit_n:=\Big\{y\in\skrit: M_y\in B_{\rho},\,\Big|\int\log g dM_y-\int \log gd\rho\Big|\le  \sfrac{\eps}{2}\Big\}.$$

 Using  \cite[Lemma~3.1]{DMS03}, we observe  that for  all $x\in\skrit_n,$
$$ \frac{dP_n(y)}{d\tilde{P}_n(y)}=\prod_{v\in V}\frac{1}{g(y(v),c(v))}\sfrac{\tilde{\P}\big\{|\skriv|=n\big\}}{\P\big\{|\skriv|=n\big\}}=e^{-n\langle \log g,\,M_y\rangle+o(n)}> e^{-n\langle \log \, g,\,\rho\rangle+n\eps/2+o(n)}$$

This  gives  us $$P_n(\skrit_n)=\int_{\skrit_n}dP_n(x)\ge  \int_{\skrit_n}e^{-n(\langle \log g,\, \rho\rangle +\eps/2)-o(n)}d\tilde{P}_n(x)=e^{-nJ(\pi,\rho)+n\eps-o(n)}\tilde{P}_n(\skrit_n).$$
Using  the  law  of  large  numbers  we  have  $\displaystyle\lim_{n\to\infty}\tilde{P}_n(\skrit_n)=1$  which  completes  the  proof  of  the  Theorem.

\subsection{Proof  of  Corollary~\ref{smb.tree1}.}
The  proof  of  Corollary~\ref{smb.tree1}  follows  from  the  definition  of  the  Kullback action  and  Theorem~\ref{smb.tree} if  we  set   $\rho=\varpi$  and  $\skrik\big\{c\,| a\big\}=1$  for  all  $c=(n,a_1,a_2,a_3,...,a_n)\in\skriy_0^*.$\\

\pagebreak

The  proof  of Theorem~\ref{smb.tree2}   below  follows  from  Theorem~\ref{smb.tree} above  using  similar  arguments  as  in \cite[p. 544]{BIV15}.
\subsection{Proof  of  Theorem~\ref{smb.tree2}.}
\begin{proof}
Note  that  the  empirical  offspring  measure  is  a  probability  measure  and  so belongs  to  the  unit  ball  in  $\skrib_*(\skriy\times\skriy^*).$ Hence,  without  loss  of  generality  we  may  assume  that  the  set  $G$  in  Theorem~\ref{smb.tree2}(ii)  is  relatively  compact. See Lemma~~\ref{Lem1} (iii). Choose  any  $\eps>O.$   Then  for  every  functional  $\rho\in \Gamma$  we can  find  a  weak  neighbourhood  such  that the  estimate  of  Theorem~\ref{smb.tree}(i)  holds.  We  choose  from  all  these  neighbourhood  a  finite  cover  of  $GM$  and  sum up over  the  estimate  in  Theorem~\ref{smb.tree}(i)   to  obtain

$$ \lim_{n\to\infty}\frac{1}{n}\log P_n\Big\{y\in GW\,\Big |\, M_y\in \Gamma\Big\}\le-\inf_{\rho\in \Gamma}J(\pi,\,\rho)+\eps.$$
As $\eps$ was  arbitrarily  chosen  and  the   lower  bound  in  Theorem~\ref{smb.tree}(ii) implies  the  lower  bound  in  Theorem~\ref{smb.tree2}(i), we  have  the  desired  results,  which  ends  the  proof of  the Theorem.

\end{proof}

\section{Conclusion}
In  this  article  we  have  found  an  LLDP  for  the  multitype  Galton-Watson process without  any  topological  restrictions  on  the  space  of probability measures  on $\skriy\times\skriy^*,$  from  a spectral  potential  point. From  the  LLDP   we  deduce other  results  such  as  the  classical  McMillian  Theorem  and  the full  conditional  large  deviation  principle  for  this process. The  main  technique  used   is  exponential  change  of  measure. These results  have  thrown more  insight  on  the  results  of  \cite{DA12} and \cite{DA17b}.\\

{\bf Acknowledgement}

\emph{This  article  was  finalized  at  the  Carnigie Banga-Africa writeshop, June 27-July  2017, in  Koforidua.}



\end{document}